\documentclass[letterpaper, 10 pt, conference]{ieeeconf}  

\IEEEoverridecommandlockouts                              

\usepackage[T1]{fontenc}

\usepackage{amsmath,amssymb,amsfonts}
\usepackage{mathtools}

\usepackage{cases}
\usepackage{graphicx}
\usepackage{float}
\usepackage{color}
\usepackage{textcomp}

\usepackage{algorithm}
\usepackage{algpseudocode}

\usepackage{cite}

\newtheorem{theorem}{Theorem}

\newtheorem{definition}{Definition}
\newtheorem{lemma}{Lemma}
\newtheorem{remark}{Remark}

\title{\LARGE \bf
On the Reduced Rank Hypersurface Germ Condition for Near-Controllability of Bilinear Systems
}

\author{Wenyu Zhao$^{1}$ and Lin Tie$^{1}$
\thanks{*This work was supported by the National Natural Science Foundation of China (62373023).}
\thanks{$^{1}$The authors are with the School of Automation Science and Electrical Engineering, Beihang University, 100083, Beijing, P. R. China 
        {\tt\small WZhao@buaa.edu.cn, tielin@buaa.edu.cn}}%
}

\begin{document}

\maketitle
\thispagestyle{empty}
\pagestyle{empty}

\begin{abstract}
Controllability of nonlinear systems has been extensively studied by using the Lie algebra methods. Although local controllability can be proved, global controllability is in general hard to obtain even for bilinear systems. Recently, a new approach is developed to study controllability of bilinear systems by checking whether the reduced rank points corresponding to the bilinear terms form hyperplanes or hypersurfaces, so that near-controllability, in the global sense, can be algebraically verified once no hypersurface exists. In this paper, we show that there is a flaw in the algorithm for checking the reduced rank hypersurface germ condition (RRHGC) that some special cases are overlooked. We thus propose a new algorithm for testing the RRHGC based on a Darboux-polynomial characterization, which can be used to deal with all cases. An example is provided to demonstrate the proposed new algorithm.
\end{abstract}

$\left. {}\right. $
\section{Introduction}

Since the concept of controllability was introduced in the 1960s \cite{kalman1960general,kalman1963mathematical}, the controllability theory of linear systems has been well established and widely applied in control engineering \cite{kailath1980linear, wonham1985linear, klamka2008controllability} and complex networks \cite{liu2011controllability,ruths2014control}. In contrast, the controllability of nonlinear systems remains difficult to characterize because of nonlinear dynamics, and many controllability problems for nonlinear and even bilinear systems remain unresolved. With the theory of differential geometry, substantial progress was made in the 1970s and thereafter \cite{elliott1971consequence,sussmann1972controllability,brockett1972system,bruni1974bilinear,boothby1975transitivity,hirschorn1975controllability,hermann1977nonlinear,isidori1995nonlinear,jurdjevic1997geometric,elliott2009bilinear}. These studies provide controllability criteria through a geometric analysis of the Lie algebras generated by vector fields. Such methods establish controllability when the vector fields are symmetric. For the more general nonsymmetric cases, however, they typically guarantee only accessibility, which is weaker than controllability.

The nonsymmetrical cases are also intractable to deal with even for bilinear systems. A central difficulty in studying controllability of bilinear systems is how to extend local controllability to a global result without additional conditions making the vector fields symmetric. To address it, we proposed a completely new approach. Specifically, we first solved the near-controllability problem of strictly bilinear systems through the concept of near-transitivity, and analyzed the influence of reduced rank points on the connectedness of the state space in \cite{zhao2024near}. Then, for homogeneous bilinear systems, the connectedness is recovered by using the drift term in \cite{zhao2026near}. As a result, near-controllability can be verified in the global sense as long as the reduced rank points form hyperplanes instead of hypersurfaces.

In fact, the significance of the new approach proposed in \cite{zhao2026near} lies in its departure from the conventional strategy of imposing additional conditions to convert a transition semigroup into a group and then establishing controllability through Lie algebraic methods. We showed that even when the transition matrices form only a semigroup, controllability problems can still be addressed globally within the framework of near-controllability. One of the technical steps in this approach is to determine whether the reduced rank points of the corresponding strictly bilinear system consist only of hyperplanes without any hypersurface. This does not mean that near-controllability fails in the cases of hypersurfaces, but the proofs in these cases will be rather complicated. Therefore, \cite{zhao2026near} classified the eigenvalues of the systems’ coefficient matrices and proposed an algorithm for checking whether there exists a hypersurface in the set formed by the reduced rank points. This algorithm works for most cases but overlooks certain special cases. To address this issue, we propose a new algorithm to include the overlooked cases in this paper. That is, for matrices with only real eigenvalues, the new algorithm unifies Propositions 3.1 and 3.2 of \cite{zhao2026near} and provides a more efficient verification procedure. For matrices with complex eigenvalues, the relevant invariant hypersurfaces are characterized through Darboux polynomials and the RRHGC verification is formulated as the problem of finding a Darboux polynomial with a constant cofactor in the new algorithm.

The rest of this paper is organized as follows. Section II analyzes the conditions for satisfying the RRHGC based on the spectral classification of the coefficient matrix. When the coefficient matrix $B$ has an entirely real spectrum, it satisfies the RRHGC if and only if at least one of the five proposed conditions holds. When $B$ has complex eigenvalues, RRHGC verification is reformulated as the problem of finding a feasible solution to some equations and inequalities. Based on the results of Section II, Section III presents a verification algorithm and provides an illustrative example. Finally, Section IV concludes the paper.

\section{Analysis of the RRHGC via Spectrum Classification}

\subsection{Problem Formulation}
Let
\begin{equation*}
    X_B(x)=Bx,
\end{equation*}
where $B\in\mathbb R^{n\times n}$ and $x\in\mathbb R^n$. Let
\begin{equation}
    \mathcal L_B q(x):=\nabla q(x)^TBx \label{def lie derive}
\end{equation}
denote the Lie derivative of a polynomial $q$ along $X_B$.

According to Definition 3.2 in \cite{zhao2026near}, $\mathcal{S}$ is a reduced rank hypersurface germ for $X_B$ if: 1) $X_B(x)\in T_x\mathcal{S}$ for every $x\in\mathcal{S}$ and 2) The equation of $\mathcal{S}$ is homogeneous and its degree $d$ satisfies $2\le d\le n$. This definition can be reformulated as follows.

\begin{definition}[Reduced Rank Hypersurface Germ]
\label{def:admissible-RRHG}
A reduced rank hypersurface germ of $X_B$ is an equation
\begin{equation*}
    q(x)=0
\end{equation*}
where the following three conditions are satisfied:
\begin{enumerate}
    \item $q\in\mathbb R[x_1,\ldots,x_n]$ is irreducible and
          homogeneous of degree $d$, with $2\le d\le n$;
    \item There exists $x^*\in\mathbb R^n\setminus\{0\}$ such that
          \begin{equation*}
              q(x^*)=0,\quad \nabla q(x^*)\ne0;
          \end{equation*}
    \item $X_B(x)$ is tangent to $q=0$ at every regular point of the
          hypersurface.
\end{enumerate}
The matrix $B$ is said to satisfy the RRHGC if $X_B$ admits such a
hypersurface germ.
\end{definition}

In Definition 1, irreducibility excludes a union of lower-degree algebraic components. The lower degree bound excludes hyperplanes, and the upper bound agrees with the degree restriction in the definition used in \cite{zhao2026near}. The regular-zero condition ensures that the real zero set contains a smooth germ of codimension one.

\begin{lemma}
\label{lem:invariance-identity}
Let $q\in\mathbb R[x_1,\ldots,x_n]$ be irreducible and homogeneous, and
suppose that $q=0$ has a real regular point. Then $q=0$ is invariant
under $X_B$ if and only if there exists a constant $\kappa\in\mathbb R$
such that
\begin{equation}
    \mathcal L_Bq=\kappa q. \label{darboux}
\end{equation}
\end{lemma}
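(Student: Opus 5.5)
The plan is to prove the two directions separately. The nontrivial direction (invariance implies $\mathcal L_Bq=\kappa q$) goes through the real Nullstellensatz for irreducible polynomials with a regular real zero, followed by a degree count. Here "invariant" is read as in Definition~\ref{def:admissible-RRHG}: $X_B(x)$ is tangent to $q=0$ at every regular point, i.e.\ $\nabla q(x)^TBx=0$ whenever $q(x)=0$ and $\nabla q(x)\ne0$.

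\emph{Sufficiency.} Suppose $\mathcal L_Bq=\kappa q$. At any point with $q(x)=0$ we get $\nabla q(x)^TBx=\kappa q(x)=0$, so $X_B(x)$ lies in $T_x\{q=0\}=\ker\nabla q(x)^T$ at every regular point. Moreover, along a trajectory $x(t)=e^{Bt}x_0$ we have $\frac{d}{dt}q(x(t))=\kappa q(x(t))$, hence $q(x(t))=e^{\kappa t}q(x_0)$. The zero set is therefore also invariant under the flow.

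\emph{Necessity.} Let $p:=\mathcal L_Bq=\nabla q^TBx$. Since $Bx$ is linear and $\nabla q$ is homogeneous of degree $d-1$, $p$ is homogeneous of degree $d$ (or zero). By hypothesis $p$ vanishes on the set $\mathrm{Reg}(q)$ of real regular zeros of $q$.

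\begin{itemize}
\item First, $\mathrm{Reg}(q)$ is nonempty by assumption and is a smooth $(n-1)$-dimensional submanifold near $x^*$.
\item Second, I will use the standard fact that an irreducible $q\in\mathbb R[x]$ having a real regular zero generates a real ideal. Equivalently, any polynomial vanishing on a nonempty open subset of the smooth part of $\{q=0\}$ is divisible by $q$.
\item For a self-contained argument, I would instead work locally. Near $x^*$, by the implicit function theorem, $\{q=0\}$ is a graph $x_n=\phi(x')$. Then $p$ vanishes on this real-analytic hypersurface germ, so the Zariski closure of the germ is contained in $\{p=0\}$. Since $\{q=0\}\cap U$ has dimension $n-1$ and $q$ is irreducible, the Zariski closure of the germ, taken over $\mathbb C$, is the irreducible hypersurface $V_{\mathbb C}(q)$. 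Hence $p$ vanishes on $V_{\mathbb C}(q)$.
\item The complex Nullstellensatz then gives $q\mid p^m$. Because $q$ is irreducible over $\mathbb R$ and has a real regular point, it remains irreducible over $\mathbb C$. Otherwise $q=c\,g\bar g$, and every real zero of $q$ would be a zero of both $g$ and $\bar g$, so $\nabla q=0$ at every real zero, a contradiction. Since $q$ is prime in $\mathbb C[x]$, we conclude $q\mid p$.
\end{itemize}

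Writing $p=\kappa q$, comparing degrees shows $\deg p\le d=\deg q$, so $\kappa$ is a constant. It is real because $p$ and $q$ are real. If $p=0$, take $\kappa=0$.

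The main obstacle is the step from ``$p$ vanishes on a real regular germ'' to ``$q\mid p$''. This is exactly where the regular-point hypothesis is essential, since $x_1^2+x_2^2$ would fail. I will handle it with the argument just given: the germ is Zariski dense in the irreducible complex hypersurface, and the regular real point guarantees absolute irreducibility. Alternatively, one can cite the real Nullstellensatz/sign-changing criterion (e.g.\ Bochnak--Coste--Roy, Thm.~4.5.1): an irreducible polynomial that changes sign generates a real ideal, and $q$ changes sign across $x^*$ because $\nabla q(x^*)\ne0$.
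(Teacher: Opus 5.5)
Your proposal is correct and follows essentially the same route as the paper. For sufficiency you both use the flow identity $q(e^{tB}x_0)=e^{\kappa t}q(x_0)$. For necessity you both argue that the real regular germ is Zariski dense in the irreducible hypersurface, deduce $q\mid\mathcal L_Bq$, and conclude by a degree count.

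You also supply details the paper leaves implicit, notably the $q=c\,g\bar g$ argument showing that a real regular zero forces irreducibility over $\mathbb C$. One small fix: invoke that argument before, not after, you assert that $V_{\mathbb C}(q)$ is irreducible.
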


\begin{proof}
(Necessity) Since $X_B(x)$ is tangent to $q=0$ at every regular point of $q(x)$,
\begin{equation*}
    \mathcal L_Bq(x)=0
\end{equation*}
whenever $q(x)=0$ and $\nabla q(x)\ne0$. Since $q$ is irreducible and has a nonsingular real zero, the regular real locus of $q=0$ is Zariski dense in the irreducible algebraic hypersurface defined by $q$. Thus, $\mathcal L_Bq$ vanishes on a Zariski dense subset of that hypersurface, and
\begin{equation*}
    q\mid \mathcal L_Bq.
\end{equation*}
The vector field is linear, so $\mathcal L_Bq$ is homogeneous of the
same degree as $q$. The quotient is consequently a real constant,
which proves \eqref{darboux}.

(Sufficiency) Suppose \eqref{darboux} hold and $x(t)$ is a
solution of $\dot x=Bx$. Then
\begin{equation*}
    \frac{d}{dt}q(x(t))=\kappa q(x(t)),
\end{equation*}
which yields
\begin{equation}
    q(e^{tB}x_0)=e^{\kappa t}q(x_0).
    \label{eq:Darboux-flow}
\end{equation}
It follows that every trajectory starting on $q=0$ remains on $q=0$.
\end{proof}

\begin{remark}
Equation \eqref{darboux} can be interpreted in terms of Darboux polynomials. For a polynomial vector field $X$, a non-constant polynomial $q$ is called a Darboux polynomial if
\begin{equation*}
    X(q)=Kq
\end{equation*}
for some polynomial $K$, called its cofactor \cite{darboux1878memoire, Zhang2017Integrability}. Therefore, \eqref{darboux} states precisely that $q$ is a Darboux polynomial of the linear vector field $X_B(x)=Bx$ with the constant cofactor $K=\kappa$.
\end{remark}

It should be noted that the RRHGC is invariant under real similarity transformations. Actually, if $B=PJP^{-1}$ and $x=Py$, then $q(y)=0$ is invariant under $\dot y=Jy$ if and only if $q(P^{-1}x)=0$ is invariant under $\dot x=Bx$. Invertible linear transformations preserve degree, irreducibility, and the existence of a real regular zero. Therefore, we consider the canonical form in the following text.

\subsection{Real-Spectrum Case}
\label{subsec:real-spectrum}

Suppose that all eigenvalues of $B$ are real, whose canonical form is given by
\begin{equation}
    J=\bigoplus_{j=1}^{s}\bigoplus_{\ell=1}^{g_j}
       J_{m_{j\ell}}(\lambda_j),
    \label{eq:real-Jordan-form}
\end{equation}
where $\lambda_1,\ldots,\lambda_s$ are pairwise distinct, $g_j$ is the
geometric multiplicity of $\lambda_j$, and $m_{j\ell}$ is the dimension of the
$\ell$-th block corresponding to $\lambda_j$. Let $h$ denote the number of Jordan blocks with dimension greater than one. For $c=(c_1,\ldots,c_s)\in\mathbb Z^s$, define
\begin{equation*}
    c_j^+:=\max\{c_j,0\},\quad
    c_j^-:=\max\{-c_j,0\}.
\end{equation*}

\begin{theorem}[Real-Spectrum Case]
\label{thm:five-case-classification}
Suppose that $B$ has only real eigenvalues. Then $B$ satisfies the RRHGC if
and only if at least one of the following five conditions
\textnormal{(R1)}, \textnormal{(R2)}, \textnormal{(R3)},
\textnormal{(R4)}, and \textnormal{(R5)} holds:
\begin{align*}
\mathrm{(R1)}\quad &max_{j,\ell}m_{j\ell}\ge3;\\
\mathrm{(R2)}\quad &h\ge2;\\
\mathrm{(R3)}\quad &g_j\ge3\quad\text{for some }j;\\
\mathrm{(R4)}\quad &g_j\ge2\ \text{and}\ g_k\ge2
                     \quad\text{for some }j\ne k;\\
\mathrm{(R5)}\quad
&\begin{aligned}[t]
  &\text{there exists }c\in\mathbb Z^s\setminus\{0\}
    \text{ such that}\\
  &\sum_{j=1}^{s}c_j=0,
    \quad \sum_{j=1}^{s}c_j\lambda_j=0, \\
    &
    2\le \sum_{j=1}^{s}c_j^+
       =\sum_{j=1}^{s}c_j^-\le n,\\
    & \gcd(|c_1|,\ldots,|c_s|)=1.
  \end{aligned}
\end{align*}
\end{theorem}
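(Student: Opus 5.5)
By Lemma~\ref{lem:invariance-identity} and the similarity invariance noted above, we may take $B=J$ as in \eqref{eq:real-Jordan-form}. The task is then to decide when the linear vector field $X_J$ admits an irreducible homogeneous real Darboux polynomial $q$ with a constant cofactor $\kappa$, of degree $2\le d\le n$, having a real regular zero.

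The plan has two directions. For sufficiency we exhibit an explicit $q$ under each of (R1)--(R5) and check the degree bound, irreducibility and the regular zero. For necessity we assume all five conditions fail and show that every constant-cofactor Darboux polynomial is a product of linear forms, so no irreducible one has degree $\ge2$.

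\emph{Sufficiency.} Each condition yields a quadratic or a binomial.
\begin{itemize}
\item Under (R1), take a block $J_m(\lambda)$ with $m\ge3$ and coordinates $y_1,\dots,y_m$ satisfying $\dot y_i=\lambda y_i+y_{i+1}$ and $\dot y_m=\lambda y_m$. Then $q=y_{m-1}^2-2y_{m-2}y_m$ gives $\mathcal L_Jq=2\lambda q$, and $q$ is an irreducible quadric with regular zeros.
\item Under (R2), take two blocks of size $\ge2$ with tails $(u_1,u_2)$ and $(v_1,v_2)$, eigenvalues $\lambda,\mu$, and $\dot u_1=\lambda u_1+u_2$, $\dot v_1=\mu v_1+v_2$. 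Then $q=u_1v_2-u_2v_1$ satisfies $\mathcal L_Jq=(\lambda+\mu)q$, and it is an irreducible rank-4 quadric.
\item Under (R3), take three eigenvector coordinates $x_1,x_2,x_3$ of $\lambda_j$. Then $q=x_1^2+x_2^2-x_3^2$ has cofactor $2\lambda_j$.
\item Under (R4), use $x_1x_2'-x_2x_1'$ with $x_1,x_2$ from $\lambda_j$ and $x_1',x_2'$ from $\lambda_k$. Its cofactor is $\lambda_j+\lambda_k$.
\item Under (R5), pick one eigencoordinate $z_j$ for each $\lambda_j$ with $c_j\ne0$ and set $q=\prod z_j^{c_j^+}-\prod z_j^{c_j^-}$. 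The two monomials have cofactors $\sum c_j^+\lambda_j$ and $\sum c_j^-\lambda_j$, which coincide because $\sum c_j\lambda_j=0$. Equal degrees come from $\sum c_j=0$. The binomial is irreducible because $\gcd(|c_j|)=1$ and the monomials share no variable, and $(1,\dots,1)$ is a regular zero.
\end{itemize}
The degree bound $d\le n$ holds in every case: the quadrics have degree $2$, and the binomial has degree $\sum c_j^+\le n$. The wrinkle is that $n$ here is the full dimension, so nothing further is needed.

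\emph{Necessity.} Suppose (R1)--(R5) all fail. Then every block has size $\le2$ and at most one block has size $2$. Moreover, at most one eigenvalue has geometric multiplicity $2$, and all others have multiplicity $1$. Decompose $q$ into weight components for the semisimple part $S$ of $J$. Since $\mathcal L_J=\mathcal L_S+\mathcal L_N$, the operators commute, and $\mathcal L_S$ is diagonal on monomials in the Jordan coordinates, $q$ must lie in a single generalized $\mathcal L_S$-eigenspace. Hence every monomial $x^\alpha$ of $q$ has the same degree $d$ and the same weight $\sum_j\alpha(j)\lambda_j=\kappa$, where $\alpha(j)$ is the total exponent on the coordinates of $\lambda_j$.

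If two distinct exponent aggregates $a,b\in\mathbb Z_{\ge0}^s$ occur, then $c=a-b$ satisfies the linear relations of (R5), and after dividing by the gcd also the normalization. The point to establish is that the failure of (R5), together with degree $\le n$, forces a single aggregate. Reducing from $c$ with gcd $g>1$ to $c/g$ is harmless, because $\sum (c/g)^+\le d\le n$ and a degree-$\ge2$ difference still appears. The exception is when $\sum (c/g)^+=1$, meaning $\lambda_j=\lambda_k$, which is impossible since the $\lambda_j$ are distinct.

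With a single aggregate $a$, $q$ is a polynomial in variables grouped by eigenvalue with fixed partial degrees $a_j$. The remaining structure is either a binary form in the two eigencoordinates of the one eigenvalue with $g_j=2$, or a polynomial involving the single $2$-block $(u_1,u_2)$ together with $\mathcal L_N=u_2\partial_{u_1}$. In the binary case, $q$ is a binary form times monomials and factors over $\mathbb C$ into linear forms; an irreducible real factor of degree $2$ such as $x_1^2+x_2^2$ has no regular real zero, so this case is excluded. In the $2$-block case, $\mathcal L_Nq=\text{const}\cdot q$ forces $\mathcal L_Nq=0$ by nilpotency, so $q$ is a polynomial in $u_2$ and the other coordinates, again a monomial. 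In all cases irreducibility then forces $d=1$, a contradiction.

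The main obstacle is the combined case in which the $g_j=2$ eigenvalue coincides with the eigenvalue of the $2$-block, giving generalized eigenvectors $x,u_1,u_2$. There we must classify $\ker\mathcal L_N$ on homogeneous forms in $(x,u_1,u_2)$. This kernel is generated by $x$ and $u_2$, so the same conclusion follows, but it has to be checked carefully.
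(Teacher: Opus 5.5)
Your proposal is correct and follows essentially the same route as the paper. For sufficiency you use the same germs for (R1)--(R3) and harmless variants for the rest: $x_1x_2'-x_2x_1'$ for (R4), and for (R5) the choice $m_1=-m_2=1$ with regular zero $(1,\dots,1)$. The paper actually proves the irreducibility of that binomial, via a unimodular change of variables in the Laurent ring, where you only assert it. For necessity you use the same $\mathcal L_S$/$\mathcal L_N$ splitting giving $\mathcal L_Sq=\kappa q$ and $\mathcal L_Nq=0$, the same aggregate-difference argument producing (R5), and the same definite-binary-form obstruction. The only difference is ordering: the paper applies $\mathcal L_Nq=0$ first to discard the non-terminal coordinate, so your ``main obstacle'' (the $x,u_1,u_2$ case) is immediate and reduces to a binary form in $(x,u_2)$, exactly as you say.
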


\begin{proof}
    (Sufficiency) \paragraph*{Case \textnormal{(R1)}}
Suppose that there exists a Jordan block with dimension greater than two. Since only the last three coordinates are used, suppose
\begin{equation*}
    X(x)=Jx=\begin{bmatrix}
        \lambda&1&0\\
        0&\lambda&1\\
        0&0&\lambda\\
    \end{bmatrix}
\end{equation*}
without loss of generality, where $x=\left[x_1\quad x_2\quad x_3\right]^T$. Their dynamics are given by
\begin{equation*}
    \dot x_1=\lambda x_1+x_2,\quad
    \dot x_2=\lambda x_2+x_3,\quad
    \dot x_3=\lambda x_3.
\end{equation*}
Define
\begin{equation}
    q=x_2^2-2x_1x_3.
    \label{germ R1}
\end{equation}
Direct calculation gives
\begin{equation*}
    \mathcal L_Jq=2\lambda q.
\end{equation*}
The quadratic form in \eqref{germ R1} is irreducible over $\mathbb R$. At $x=\left[0\quad 0\quad 1\right]^T$, it satisfies $q=0$ and $\nabla q^T=\left[-2\quad 0\quad 0\right]\ne\textbf{0}$. Thus, equation \eqref{germ R1} is an admissible reduced rank hypersurface germ for $X_J$.

\paragraph*{Case \textnormal{(R2)}}
Choose the last two coordinates from two nontrivial Jordan blocks, given by
\begin{equation*}
\begin{aligned}
    \dot z_1&=\lambda z_1+z_2,&
    \dot z_2&=\lambda z_2,\\
    \dot w_1&=\mu w_1+w_2,&
    \dot w_2&=\mu w_2.
\end{aligned}
\end{equation*}
Let
\begin{equation}
    q=z_1w_2-z_2w_1.
    \label{germ R2}
\end{equation}
Then
\begin{equation*}
    \mathcal L_Jq=(\lambda+\mu)q
\end{equation*}
with
\begin{equation*}
    J=\begin{bmatrix}
        \lambda&1&0&0\\
        0&\lambda&0&0\\
        0&0&\mu&1\\
        0&0&0&\mu\\
    \end{bmatrix}.
\end{equation*}
The polynomial in \eqref{germ R2} is irreducible. At the point
\begin{equation*}
    (z_1,z_2,w_1,w_2)=(1,0,0,0),
\end{equation*}
the value of $q$ is zero and its gradient is nonzero. Consequently, \eqref{germ R2} defines an admissible germ.

\paragraph*{Case \textnormal{(R3)}}
Choose three coordinates $z_1,z_2,z_3$ as the last coordinate of three Jordan blocks corresponding to the same eigenvalue $\lambda_j$. The dynamics of each $z_i$ satisfy $\dot z_i=\lambda_jz_i$. The polynomial
\begin{equation}
    q=z_1^2+z_2^2-z_3^2
    \label{germ R3}
\end{equation}
satisfies $\mathcal L_Jq=2\lambda_jq$, where
\begin{equation*}
    J=\begin{bmatrix}
        \lambda_j&0&0\\
        0&\lambda_j&0\\
        0&0&\lambda_j\\
    \end{bmatrix}.
\end{equation*}
It can be verified that \eqref{germ R3} is an admissible germ.

\paragraph*{Case \textnormal{(R4)}}
Choose terminal coordinates $z_1,z_2$ associated with $\lambda_j$ and
$w_1,w_2$ associated with $\lambda_k$, where $j\ne k$. Set
\begin{equation}
    q=z_1w_1+z_2w_2.
    \label{germ R4}
\end{equation}
Then
\begin{equation*}
    \mathcal L_Jq=(\lambda_j+\lambda_k)q,
\end{equation*}
where
\begin{equation*}
    J=\begin{bmatrix}
        \lambda_j&0&0&0\\
        0&\lambda_j&0&0\\
        0&0&\lambda_k&0\\
        0&0&0&\lambda_k\\
    \end{bmatrix}.
\end{equation*}
\eqref{germ R4} can also be proved to be admissible.

\paragraph*{Case \textnormal{(R5)}}
For each $\lambda_j$, select the last coordinate $z_j$ from a Jordan
block of $\lambda_j$. Let
\begin{equation}
    q=m_1\prod_{j=1}^{s}z_j^{c_j^+}
           +m_2\prod_{j=1}^{s}z_j^{c_j^-},
    \label{germ R5}
\end{equation}
where $m_1$ and $m_2$ are nonzero real coefficients. Both monomials in \eqref{germ R5} have the same degree. Since
\begin{equation*}
    c_j=c_j^+-c_j^-,
\end{equation*}
it follows from Condition (R5) that
\begin{equation*}
    \sum_jc_j^+\lambda_j=\sum_jc_j^-\lambda_j.
\end{equation*}
Note that
\begin{equation*}
    \mathcal{L}_Jz_j^{c_j^+}=c_j^+z_j^{c_j^+-1}\lambda_jz_j=\lambda_jc_j^+z_j^{c_j^+}.
\end{equation*}
Therefore,
\begin{align*}
    \mathcal L_Jq&=\mathcal L_J\left(m_1\prod_{j=1}^{s}z_j^{c_j^+}\right)+\mathcal L_J\left(m_2\prod_{j=1}^{s}z_j^{c_j^-}\right)\\
    &=m_1\sum_jc_j^+\lambda_j\prod_kz_k^{c_k^+}+m_2\sum_jc_j^-\lambda_j\prod_kz_k^{c_k^-}\\
    &=\left(\sum_jc_j^+\lambda_j\right)q
\end{align*}
where $J=\text{diag}(\lambda_1,\ldots,\lambda_s)$. We now prove that $q$ is irreducible. The two monomials in $q$ have disjoint supports and hence have no non-constant monomial common factor. Let
\begin{align*}
    P&:=\mathbb R[z_1,\ldots,z_s],\\
    L&:=\mathbb R[z_1^{\pm1},\ldots,z_s^{\pm1}]
\end{align*}
denote the ordinary polynomial ring and the corresponding Laurent polynomial ring, respectively. Let $z^{c^+}$ denote $\prod_{j=1}^{s}z_j^{c_j^+}$ and $z^{c^-}$ denote $\prod_{j=1}^{s}z_j^{c_j^-}$. Since $c=c^+-c^-$, we have
\begin{equation*}
    q=m_1z^{c^+}+m_2z^{c^-}=z^{c^-}(m_1z^c+m_2)\in L.
\end{equation*}
The Laurent monomial $z^{c^-}$ is a unit in $L$, with inverse $z^{-c^-}$. Hence it suffices to prove that $m_1z^c+m_2$ is irreducible in $L$.

Since
\begin{equation*}
    \gcd(|c_1|,\ldots,|c_s|)=1,
\end{equation*}
there exists a unimodular matrix
\begin{equation*}
    U\in\operatorname{SL}(s, \mathbb Z)
\end{equation*}
such that
\begin{equation*}
    Uc=e_1,
\end{equation*}
where $e_1=[1\quad 0\quad\cdots\quad 0]^T$. The matrix $U$ induces an automorphism of Laurent polynomial rings by
\begin{equation*}
    z^\alpha\longmapsto w^{U\alpha},
    \quad \alpha\in\mathbb Z^s.
\end{equation*}
This change of variables is invertible because
$U^{-1}\in\operatorname{SL}(s, \mathbb Z)$. Under this automorphism,
\begin{equation*}
    m_1z^c+m_2\longmapsto m_1w^{Uc}+m_2=m_1w_1+m_2.
\end{equation*}
Moreover,
\begin{equation*}
    \frac{
      \mathbb R[w_1^{\pm1},\ldots,w_s^{\pm1}]
    }{(m_1w_1+m_2)}
    \cong
    \mathbb R[w_2^{\pm1},\ldots,w_s^{\pm1}],
\end{equation*}
and the ring on the right-hand side is an integral domain. Therefore $(m_1w_1+m_2)$ is a prime ideal, so $m_1w_1+m_2$ is irreducible. Since the above Laurent change of variables is an automorphism, $m_1z^c+m_2$ is irreducible in $L$. Consequently, $q$ is also irreducible in $L$ up to multiplication by a Laurent unit.

We finally transfer this conclusion back to the ordinary polynomial
ring $P$. Suppose, to the contrary, that \eqref{germ R5} admits a
nontrivial factorization
\begin{equation*}
    q=f g,
    \quad f,g\in P.
\end{equation*}
Since $P\subset L$, this is also a factorization in $L$. The irreducibility of $q$ in $L$ implies that one of the two factors, say $f$, must be a unit in $L$. The units of $L$ are precisely the Laurent monomials $a z^\nu$, where $a\in\mathbb R\setminus\{0\}$, $\nu\in\mathbb Z^s$. Since $f$ belongs to the ordinary polynomial ring $P$, all components of $\nu$ must be nonnegative. Thus $f$ is a nonzero constant times an ordinary monomial.

Since this monomial divides
\begin{equation*}
    q=m_1z^{c^+}+m_2z^{c^-}
\end{equation*}
in $P$, it must divide both $z^{c^+}$ and $z^{c^-}$. However, for each $j$, the definitions of $c_j^+$ and $c_j^-$ give
\begin{equation*}
    \min\{c_j^+,c_j^-\}=0.
\end{equation*}
It follows that
\begin{equation*}
    \gcd(z^{c^+},z^{c^-})=1.
\end{equation*}
Hence the only ordinary monomial that divides both terms is a constant. This contradicts the assumption that $f$ is non-constant. Therefore, $q$ is irreducible in $\mathbb R[z_1,\ldots,z_s]$.

We now verify Condition 2) in Definition 1. Since $\gcd(|c_1|,\ldots,|c_s|)=1$, there exists an index $k$ such that $c_k$ is odd. Setting $z_j^*=1$ for $j\neq k$ and choosing $z_k^*\in\mathbb{R}\setminus\{0\}$ such that $(z_k^*)^{c_k}=-m_2/m_1$, we obtain $q(z^*)=0$. Moreover,
\[
\frac{\partial q}{\partial z_k}(z^*)
=-m_2(z^*)^{c^-}\frac{c_k}{z_k^*}\neq \textbf{0},
\]
so $z^*$ is a real regular zero of $q$. Finally, note that the degree of $q$
\begin{equation*}
    d(q)=\sum_{j=1}^sc_j^+=\sum_{j=1}^sc_j^-
\end{equation*}
satisfies $2\le d(q)\le n$, then we can conclude that $q$ is an admissible reduced rank hypersurface germ.

The sufficiency proof of Theorem 1 has been completed.

(Necessity) Suppose that $B$ satisfies the RRHGC. Consider the canonical form shown in \eqref{eq:real-Jordan-form}. Let $q$ be an admissible irreducible
homogeneous polynomial of degree $d$. By
Lemma~\ref{lem:invariance-identity},
\begin{equation}
    \mathcal L_Jq=\kappa q
    \label{operator L_J}
\end{equation}
for some $\kappa\in\mathbb R$.

Rewrite $J$ as the sum of a diagonal matrix and an upper triangular matrix, given by
\begin{equation*}
    J=S+N,
\end{equation*}
where $S$ is diagonal, $N$ is nilpotent, and $SN=NS$. Let $\mathcal H_d$ denote the finite-dimensional space of degree-$d$ homogeneous polynomials, given by
\begin{equation*}
    \mathcal H_d
    :=\{q\in\mathbb R[x_1,\ldots,x_n]\mid q
       \text{ is homogeneous of degree }d\}.
\end{equation*}

The induced operators of $\mathcal L_S$ and $\mathcal L_N$ on $\mathcal H_d$ satisfy
\begin{equation}
    \mathcal L_J=\mathcal L_S+\mathcal L_N, \label{operator decom}
\end{equation}
where $[\mathcal L_S,\mathcal L_N]=0$. The operator $\mathcal L_S$ is semisimple in the monomial basis. The operator $\mathcal L_N$ is nilpotent because it is the infinitesimal action induced by the nilpotent linear map $N$ on $\mathcal H_d$. Thus, they are respectively the semisimple and nilpotent parts of $\mathcal L_J$.

Since $\mathcal L_S$ is semisimple, $\mathcal H_d$ can be expressed as the direct sum of eigenspaces of $\mathcal L_S$, i.e.,
\begin{equation*}
    \mathcal H_d=\bigoplus_\eta E_\eta,\quad E_\eta=\ker (\mathcal L_S-\eta I).
\end{equation*}
Thus, $q$ can be decomposed as
\begin{equation}
    q=\sum_\eta q_\eta, \label{decom q}
\end{equation}
where $q_\eta\in E_\eta$ and
\begin{equation}
    \mathcal L_S q_\eta=\eta q_\eta. \label{q_eta eigen}
\end{equation}
By \eqref{operator decom}, for any $q_\eta\in E_\eta$,
\begin{equation*}
    \mathcal{L}_S(\mathcal{L}_Nq_\eta)=\mathcal{L}_N(\mathcal{L}_Sq_\eta)=\mathcal{L}_N(\eta q_\eta)=\eta\mathcal{L}_Nq_\eta.
\end{equation*}
Therefore, $\mathcal{L}_Nq_\eta$ is also in the eigenspace $E_\eta$. Combining \eqref{operator L_J}, \eqref{operator decom} and \eqref{decom q} yields
\begin{equation*}
    (\mathcal{L}_S+\mathcal{L}_N)\sum_\eta q_\eta=\kappa\sum_\eta q_\eta.
\end{equation*}
It follows from \eqref{q_eta eigen} that
\begin{equation*}
    \sum_\eta((\eta-\kappa)I+\mathcal{L}_N)q_\eta=0.
\end{equation*}
Since $q$ is the direct sum of $q_\eta$, we obtain
\begin{equation}
    ((\eta-\kappa)I+\mathcal{L}_N)q_\eta=0 \label{eta-kappa}
\end{equation}
for each $\eta$.

On the eigenspace with eigenvalue $\eta\ne\kappa$, since $\mathcal{L}_N$ is nilpotent, whose eigenvalues are all zero, which implies that $q_\eta=0$. Therefore, the projection of $q$ onto $E_\eta$ is zero when $\eta\ne\kappa$. That is, the only nonzero component is in the eigenspace $E_\kappa$, which means that
\begin{equation*}
    \mathcal{L}_Sq=\kappa q.
\end{equation*}

Then, on the eigenspace with eigenvalue $\eta=\kappa$, it follows from \eqref{eta-kappa} that $\mathcal{L}_Nq_\kappa=0$. Since $q_\eta=0$ when $\eta\ne\kappa$, we have
\begin{equation}
    \mathcal{L}_Nq=0. \label{operator L_N}
\end{equation}

We now prove that at least one of the five conditions holds. Suppose that none of the five conditions holds. The failure of \textnormal{(R1)} and \textnormal{(R2)} implies
that all Jordan blocks have size at most two and that at most one block
has size two. Hence either $N=0$ or
\begin{equation*}
    \mathcal L_N=y_2\frac{\partial}{\partial y_1},
\end{equation*}
where $y_1, y_2$ denote the coordinates of the unique nontrivial block. By \eqref{operator L_N}, we obtain
\begin{equation*}
    \frac{\partial q}{\partial y_1}=0.
\end{equation*}
Thus, $q$ depends only on terminal Jordan-chain coordinates. Denote the $g_j$ terminal coordinates
associated with $\lambda_j$ by
\begin{equation*}
    z_{j,1},\ldots,z_{j,g_j}.
\end{equation*}

For a monomial $M$ with respect to these variables, let
\begin{equation*}
    \beta_j(M):=\deg_{z_{j,1},\ldots,z_{j,g_j}}M
\end{equation*}
be its total degree in the $j$th eigenvalue group. Since $\mathcal L_Sq=\kappa q$, every monomial in $q$ satisfies
\begin{equation}
\begin{aligned}
    \sum_{j=1}^{s}\beta_j(M)=d,\quad
    \sum_{j=1}^{s}\beta_j(M)\lambda_j=\kappa.
    \label{degree-weight}
\end{aligned}
\end{equation}

Suppose that two monomials $M$ and $M'$ in $q$ corresponding to different vectors $\beta$ and $\beta'$. Let $c=\beta-\beta'$ and divide $c$ by the greatest common divisor of its
nonzero entries. We derive from \eqref{degree-weight} that
\begin{equation*}
    \sum_jc_j=0,
    \quad
    \sum_jc_j\lambda_j=0.
\end{equation*}
Moreover,
\begin{equation*}
    d(c)=\sum_jc_j^+=\sum_jc_j^-\le d\le n.
\end{equation*}
The case $d(c)=1$ would imply $\lambda_j=\lambda_k$ for two distinct indices, contrary to the definition of the distinct eigenvalues $\lambda_1,\ldots,\lambda_s$. Therefore $d(c)\ge2$, and condition \textnormal{(R5)} holds, which yields a contradiction.

It follows that, when \textnormal{(R5)} fails, all monomials in $q$ have the same vector $\beta$. Thus, $q$ is multihomogeneous with a fixed degree in every eigenvalue group. The failure of \textnormal{(R3)} and \textnormal{(R4)} implies that $g_j\le2$ for every $j$ and at most one eigenvalue has a geometric multiplicity of 2. Since $q$ is multihomogeneous, every one-variable group contributes a common monomial factor to $q$. Note that $q$ is irreducible and has degree at least two, then all such group degrees must be zero. Consequently, $q$ can depend only on the two variables of a single group, if such a group exists.

An irreducible real homogeneous polynomial in two variables has degree one or two. In the degree-two case it is a definite quadratic form. Otherwise, it has a real linear factor. A definite homogeneous quadratic vanishes only when both of its variables are zero, and its gradient also vanishes there. Even after embedding this zero set in $\mathbb R^n$, it has no real regular point. This contradicts the admissibility of $q$. Therefore at least one of the five conditions must hold.
\end{proof}

\subsection{Non-Pure-Real-Spectrum Case}
\label{subsec:complex-spectrum}

When there exist complex eigenvalues, the classification introduced in Section II.B is incomplete. This is because rotational phases can satisfy integer relations, and real forms of complex Jordan chains can generate invariants that are not products of squared moduli. 

Suppose that $q=0$ is a reduced rank hypersurface germ of $X_B$, whose degree is denoted by $d$. That is, $q\in\mathcal{H}_d$, and the dimension of $\mathcal{H}_d$
\begin{equation*}
    N_d:=\dim\mathcal H_d=\binom{n+d-1}{d}.
\end{equation*}
Since $\mathcal{H}_d$ is a vector space, any homogeneous polynomial $q\in\mathbb R[x_1,\ldots,x_n]$ can be expressed by a linear combination of the $N_d$ basis of $\mathcal H_d$. For instance, consider the case when $n=2$ and $d=2$. dim$\mathcal{H}_2=3$, where a basis is given by $x_1^2,x_1x_2$ and $x_2^2$. That is,
\begin{equation*}
    \mathcal{H}_2=\text{span}\{x_1^2, x_1x_2, x_2^2\}.
\end{equation*}
Every homogeneous polynomial of degree two with respect to $x_1$ and $x_2$ can be linearly expressed by the 3 basis.

Define the multi-index set
\begin{equation*}
    \mathcal A_{n,d}:=
    \{\alpha\in\mathbb Z_{\ge0}^n\mid |\alpha|=d\}.
\end{equation*}
Let $x^\alpha$ denote $\prod_{i}x_i^{\alpha_i}$. After fixing an ordering of the monomial basis $\{x^\alpha:\alpha\in\mathcal A_{n,d}\}$, let $H_d(B)\in\mathbb R^{N_d\times N_d}$ be the matrix representing the linear transformation on $\mathcal{H}_d$. According to \eqref{def lie derive}, $\mathcal{H}_d(B)$ can be constructed directly from
\begin{equation}
    \mathcal L_Bx^\alpha
    =\sum_{i,j=1}^{n}\alpha_iB_{ij}
       x^{\alpha-e_i+e_j},
    \label{induced-action}
\end{equation}
where $e_i$ denotes an $n$-dimensional vector with the $i$-th entry 1 and the others 0. The process of calculating $H_d(B)$ is illustrated by the following example. Suppose
\begin{equation*}
    B=\begin{bmatrix}
        b_{11}&b_{12}\\
        b_{21}&b_{22}\\
    \end{bmatrix}.
\end{equation*}
By \eqref{induced-action}, we can derive the following equations:
\begin{align*}
    \mathcal{L}_Bx_1^2&=2b_{11}x_1^2+2b_{12}x_1x_2,\\
    \mathcal{L}_Bx_1x_2&=b_{21}x_1^2+(b_{11}+b_{22})x_1x_2+b_{12}x_2^2,\\
    \mathcal{L}_Bx_2^2&=2b_{21}x_1x_2+2b_{22}x_2^2.
\end{align*}
Thus, under the set of monomial basis $\{x_1^2, x_1x_2, x_2^2\}$, we obtain
\begin{equation*}
    \mathcal{H}_2(B)=\begin{bmatrix}
        2b_{11}&b_{21}&0\\
        2b_{12}&b_{11}+b_{22}&2b_{21}\\
        0&b_{12}&2b_{22}\\
    \end{bmatrix}.
\end{equation*}

With the notation of $H_d(B)$, we can rewrite \eqref{darboux} as a matrix equation. Specifically, for a coefficient vector $a=(a_\alpha)_{\alpha\in\mathcal A_{n,d}}$, write
\begin{equation}
    q_a(x)=\sum_{\alpha\in\mathcal A_{n,d}}a_\alpha x^\alpha. \label{eq:q_a construct}
\end{equation}
Then
\begin{equation*}
    \mathcal L_Bq_a=\kappa q_a
\end{equation*}
if and only if
\begin{equation*}
    H_d(B)a=\kappa a.
\end{equation*}

\begin{theorem}
\label{thm:Darboux-criterion}
$B$ satisfies the RRHGC if and only if there exist
\begin{equation*}
    d\in\{2,\ldots,n\},\quad
    \kappa\in\mathbb R,
    \quad a\in\mathbb R^{N_d}\setminus\{0\}
\end{equation*}
such that:
\begin{enumerate}
    \item $H_d(B)a=\kappa a$;
    \item $q_a$ is irreducible in
          $\mathbb R[x_1,\ldots,x_n]$;
    \item there exists $x^*\in\mathbb R^n\setminus\{0\}$ satisfying
          \begin{equation*}
              q_a(x^*)=0,
              \quad \nabla q_a(x^*)\ne0.
          \end{equation*}
\end{enumerate}
\end{theorem}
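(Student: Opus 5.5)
The plan is to reduce Theorem~\ref{thm:Darboux-criterion} to Definition~\ref{def:admissible-RRHG} and Lemma~\ref{lem:invariance-identity}, using the linear-algebra reformulation stated just before the theorem. That reformulation says that, under the coordinate identification \eqref{eq:q_a construct}, $\mathcal L_Bq_a=\kappa q_a$ holds if and only if $H_d(B)a=\kappa a$. It follows from \eqref{induced-action}: $\mathcal L_B$ maps $\mathcal H_d$ into itself, because $X_B$ is linear, and $H_d(B)$ is by construction its matrix in the monomial basis. Once this is in place, both directions are a matter of matching conditions one by one.

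\emph{Necessity.} Suppose $B$ satisfies the RRHGC, and let $q=0$ be an admissible germ as in Definition~\ref{def:admissible-RRHG}. Then $q$ is irreducible and homogeneous of degree $d\in\{2,\ldots,n\}$, and it has a real regular zero $x^*\ne0$.
\begin{enumerate}
    \item Write $q=q_a$ with coefficient vector $a\in\mathbb R^{N_d}$. Since $q$ is non-constant, $a\ne0$.
    \item Condition 3) of the definition says $X_B$ is tangent to $q=0$ at every regular point. This is exactly the invariance hypothesis used in the necessity part of Lemma~\ref{lem:invariance-identity}, whose proof only relies on tangency at regular points.
    \item The lemma therefore gives a $\kappa\in\mathbb R$ with $\mathcal L_Bq=\kappa q$, and hence $H_d(B)a=\kappa a$.
    \item Items 2) and 3) of the theorem are then just conditions 1) and 2) of the definition restated.
\end{enumerate}

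\emph{Sufficiency.} Suppose $d$, $\kappa$ and $a$ satisfy items 1)--3), and put $q=q_a$.
\begin{enumerate}
    \item $q$ is nonzero, lies in $\mathcal H_d$, and is irreducible by item 2). It is genuinely of degree $d$ and non-constant, since $d\ge2$.
    \item Item 1) gives $\mathcal L_Bq=\kappa q$.
    \item At any regular point $x$ of $q=0$ we get $\nabla q(x)^TBx=\kappa q(x)=0$. So $X_B(x)\in\ker\nabla q(x)^T=T_x\{q=0\}$, which is tangency, i.e.\ condition 3) of the definition. One can alternatively invoke the flow identity \eqref{eq:Darboux-flow} from the sufficiency part of Lemma~\ref{lem:invariance-identity}.
    \item Item 3) supplies condition 2) of the definition, and $2\le d\le n$ gives the degree bounds in condition 1).
\end{enumerate}
Hence $q=0$ is an admissible reduced rank hypersurface germ, and $B$ satisfies the RRHGC.

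The only step needing care is the necessity direction, where Lemma~\ref{lem:invariance-identity} is invoked. Its hypotheses (irreducibility, homogeneity, and a real regular point) are exactly conditions 1) and 2) of Definition~\ref{def:admissible-RRHG}. Its conclusion depends on the Zariski-density argument: the real regular locus of an irreducible $q$ with a nonsingular real zero is Zariski dense in $\{q=0\}$, so $q\mid\mathcal L_Bq$. That argument is taken from the lemma as already established. Beyond this, one should confirm that $\kappa$ is real. This is automatic, because $H_d(B)$ and $a$ are both real and $a\neq 0$.
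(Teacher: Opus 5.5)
Your proposal is correct and follows essentially the same route as the paper: necessity comes from applying Lemma~\ref{lem:invariance-identity} to the admissible germ and rewriting $\mathcal L_Bq=\kappa q$ as $H_d(B)a=\kappa a$ in the monomial basis, and sufficiency reverses this and matches the remaining items with Definition~\ref{def:admissible-RRHG}. Two points are slight refinements of the paper's appeal to the flow identity \eqref{eq:Darboux-flow}: your direct tangency check $\nabla q(x)^TBx=\kappa q(x)=0$, and your remark that the lemma's necessity argument uses only tangency at regular points.
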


\begin{proof}
If $B$ satisfies the RRHGC, take the irreducible homogeneous equation $q=0$ supplied by Definition~\ref{def:admissible-RRHG}. By Lemma~\ref{lem:invariance-identity}, $\mathcal L_Bq=\kappa q$ for a real constant $\kappa$. Expressing $q$ in the monomial basis of $\mathcal H_d$ yields $H_d(B)a=\kappa a$. The remaining two properties are consistent with Conditions 1) and 2) in Definition 1.

Conversely, the first condition and \eqref{eq:Darboux-flow} show that $q_a=0$ is invariant. Irreducibility, the degree bound, and the regular real zero show that it is an admissible non-hyperplane hypersurface germ. Thus $B$ satisfies the RRHGC.
\end{proof}

We now illustrate how to verify the three conditions in Theorem 2.

\textbf{Condition 1).} Fix an ordering of the monomial basis for $\mathcal{H}_d$. Then, we can construct $H_d(B)$ by \eqref{induced-action}. Compute every real eigenvalue $\kappa$ of $H_d(B)$ and parameterize the whole real eigenspace $E_{d,\kappa}=\ker(H_d(B)-\kappa I)$. $q_a$ can be derived as \eqref{eq:q_a construct}. We impose the normalization
\begin{equation*}
    \|a\|^2:=\sum_{\alpha\in\mathcal A_{n,d}}a_\alpha^2=1
\end{equation*}
to exclude the zero polynomial and remove the irrelevant multiplication of $q_a$ by a nonzero scalar.

\textbf{Condition 2).} Since $q_a$ is homogeneous, every nontrivial factorization of $q_a$ is also homogeneous.  Consequently, $q_a$ is reducible over $\mathbb R$ if and
only if there exists an integer
\begin{equation*}
    1\leq p\leq\left\lfloor\frac d2\right\rfloor
\end{equation*}
and nonzero homogeneous real polynomials of degrees $p$ and $d-p$ whose product equals $q_a$. It is sufficient to use $p\leq\lfloor d/2\rfloor$, since the two factors can be interchanged. For such a $p$, introduce coefficient vectors
\begin{equation*}
    b=(b_\beta)_{\beta\in\mathcal A_{n,p}},
    \quad
    c=(c_\gamma)_{\gamma\in\mathcal A_{n,d-p}},
\end{equation*}
and set
\begin{equation*}
    q_b(x)=\sum_{\beta\in\mathcal A_{n,p}}b_\beta x^\beta,
    \quad
    q_c(x)=\sum_{\gamma\in\mathcal A_{n,d-p}}c_\gamma x^\gamma.
\end{equation*}
The polynomial identity $q_a=q_bq_c$ is equivalent to the following equation
\begin{equation}
    a_\delta
    =\sum_{\substack{\beta\in\mathcal A_{n,p},\ 
                      \gamma\in\mathcal A_{n,d-p}\\
                      \beta+\gamma=\delta}}
       b_\beta c_\gamma, \quad \sum_{\beta\in\mathcal A_{n,p}}b_\beta^2=1, \label{factorization}
\end{equation}
where $\delta\in\mathcal A_{n,d}$. Normalizing $b$ does not restrict the possible factorizations. If $q_a=\widetilde q_b\widetilde q_c$ with $\widetilde q_b\neq0$, one may divide $\widetilde q_b$ by its coefficient norm and multiply $\widetilde q_c$ by the same norm. Hence the exact irreducibility condition is
\begin{equation*}
    \operatorname{Irr}_d(a)
    :=
    \bigwedge_{p=1}^{\lfloor d/2\rfloor}
    \neg R_{p,d-p}(a),
\end{equation*}
where $R_{p,d-p}(a)$ is \textit{True} when \eqref{factorization} holds.

\textbf{Condition 3).} The existence of a nonzero real regular zero can also be expressed by
polynomial equalities and inequalities.  Since $q_a$ is homogeneous, every nonzero zero can be rescaled to the unit sphere.  Therefore the required condition is that there exists $x^*\in\mathbb R^n$ such that
\begin{equation*}
\begin{aligned}
       q_a(x^*)=0,
       \quad \sum_{i=1}^{n}(x_i^*)^2=1,
       \quad \sum_{i=1}^{n}
       \left(
          \frac{\partial q_a}{\partial x_i}(x^*)
       \right)^2>0.
\end{aligned}
\end{equation*}
The last strict inequality is equivalent to $\nabla q_a(x^*)\neq0$.

Combining the preceding conditions gives the following formula:
\begin{equation}
\begin{aligned}
    &H_d(B)a=\kappa a, \quad \sum_{\alpha\in\mathcal A_{n,d}}a_\alpha^2=1,
    \\
    &\bigwedge_{p=1}^{\lfloor d/2\rfloor}
      \neg R_{p,d-p}(a),
    \\
    &q_a(x^*)=0,
      \quad\sum_{i=1}^{n}(x_i^*)^2=1,
    \\
    &\sum_{i=1}^{n}
      \left(
        \frac{\partial q_a}{\partial x_i}(x^*)
      \right)^2>0.
\end{aligned}
\label{final fomular}
\end{equation}

\begin{remark}
For matrices with non-pure-real eigenvalues, verifying the RRHGC essentially amounts to searching for a real homogeneous Darboux polynomial with a constant cofactor that is irreducible over $\mathbb{R}$ and whose zero set contains a nonzero real regular point. Although this search may become computationally demanding for high-dimensional systems, Theorem 2 provides a complete characterization and therefore determines whether the RRHGC is satisfied without overlooking any admissible reduced rank hypersurface germ.
\end{remark}

\section{Algorithm and Example}
\label{sec:complete-algorithm}

In this section, we provide an algorithm to verify the RRHGC based on Theorems 1 and 2. The algorithm first determines the spectral distribution of the matrix. If the spectrum is entirely real, it sequentially checks whether one of the five conditions in Theorem 1 is satisfied. Otherwise, it attempts to find a solution satisfying the three conditions in Theorem 2.

\begin{algorithm}[t]
\caption{Complete verification of the RRHGC}
\label{alg:complete-RRHGC}
\begin{algorithmic}[1]
\Require A matrix $B\in\mathbb R^{n\times n}$
\Ensure YES if and only if $B$ satisfies the RRHGC
\If{every eigenvalue of $B$ is real}
    \State Compute the Jordan data
    $\lambda_j$, $g_j$, $m_{j\ell}$ and $h$ in
    \eqref{eq:real-Jordan-form}.
    \If{$\max_{j,\ell}m_{j\ell}\ge3$}
        \State \Return YES \Comment{Condition \textnormal{(R1)}}
    \EndIf
    \If{$h\ge2$}
        \State \Return YES \Comment{Condition \textnormal{(R2)}}
    \EndIf
    \If{$g_j\ge3$ for some $j$}
        \State \Return YES \Comment{Condition \textnormal{(R3)}}
    \EndIf
    \If{$g_j\ge2$ and $g_k\ge2$ for some $j\ne k$}
        \State \Return YES \Comment{Condition \textnormal{(R4)}}
    \EndIf
    \ForAll{$c\in\mathbb Z^s\setminus\{0\}$ with $\gcd(|c_1|,\ldots,|c_s|)=1$ and         $|c_j|\le n$}
        \If{$\sum_jc_j=0$, $\sum_jc_j\lambda_j=0$, and
             $2\le\sum_jc_j^+=\sum_jc_j^-\le n$}
            \State \Return YES
            \Comment{Condition \textnormal{(R5)}}
        \EndIf
    \EndFor
    \State \Return \textsc{No}
\Else
    \For{$d=2,\ldots,n$}
        \State Construct $H_d(B)$ using \eqref{induced-action}.
        \If{there exist $\kappa\in\mathbb{R}$, $a\in\mathbb R^{N_d}$, $x^*\in\mathbb R^n$ such that \eqref{final fomular} holds}
            \State \Return YES \Comment{Theorem 2}
        \EndIf
    \EndFor
    \State \Return NO
\EndIf
\end{algorithmic}
\end{algorithm}

Example 1 is provided to illustrate the procedure of Algorithm 1.

\textbf{Example 1.} Consider the following matrix
\begin{equation*}
    B=
    \begin{bmatrix}
        1&-1&1&0\\
        1& 1&0&1\\
        0& 0&1&-1\\
        0& 0&1& 1
    \end{bmatrix}.
\end{equation*}
Its eigenvalues are $1+\mathrm i$ and $1-\mathrm i$, each with algebraic
multiplicity two.  The matrix contains a length-two complex Jordan chain, so
the Non-pure-real-spectrum branch of the algorithm applies.  In coordinates,
\begin{equation}
\begin{aligned}
    \dot x_1&=x_1-x_2+x_3,
    &\quad \dot x_2&=x_1+x_2+x_4,\\
    \dot x_3&=x_3-x_4,
    &\quad \dot x_4&=x_3+x_4.
\end{aligned}
\label{dynamics ex1}
\end{equation}

The procedure starts with $d=2$.  Since
\begin{equation*}
    N_2=\binom{4+2-1}{2}=10,
\end{equation*}
choose the ordered monomial basis
\begin{equation}
\begin{aligned}
    \mathcal M_2=\bigl(&x_1^2,x_1x_2,x_1x_3,x_1x_4,x_2^2,x_2x_3,x_2x_4,x_3^2,x_3x_4,x_4^2\bigr).
\end{aligned}
\label{basis}
\end{equation}
Direct application of the product rule to
\eqref{dynamics ex1} gives
\begin{align*}
\mathcal L_Bx_1^2
 &=2x_1^2-2x_1x_2+2x_1x_3,\\
\mathcal L_B(x_1x_2)
 &=x_1^2+2x_1x_2+x_1x_4-x_2^2+x_2x_3,\\
\mathcal L_B(x_1x_3)
 &=2x_1x_3-x_1x_4-x_2x_3+x_3^2,\\
\mathcal L_B(x_1x_4)
 &=x_1x_3+2x_1x_4-x_2x_4+x_3x_4,\\
\mathcal L_Bx_2^2
 &=2x_1x_2+2x_2^2+2x_2x_4,\\
\mathcal L_B(x_2x_3)
 &=x_1x_3+2x_2x_3-x_2x_4+x_3x_4,\\
\mathcal L_B(x_2x_4)
 &=x_1x_4+x_2x_3+2x_2x_4+x_4^2,\\
\mathcal L_Bx_3^2
 &=2x_3^2-2x_3x_4,\\
\mathcal L_B(x_3x_4)
 &=x_3^2+2x_3x_4-x_4^2,\\
\mathcal L_Bx_4^2
 &=2x_3x_4+2x_4^2.
\end{align*}
The induced matrix is
\begin{equation*}
H_2(B)=
\begin{bmatrix}
2& 1& 0& 0&0& 0&0& 0& 0&0\\
-2&2& 0& 0&2& 0&0& 0& 0&0\\
2& 0& 2& 1&0& 1&0& 0& 0&0\\
0& 1&-1& 2&0& 0&1& 0& 0&0\\
0&-1& 0& 0&2& 0&0& 0& 0&0\\
0& 1&-1& 0&0& 2&1& 0& 0&0\\
0& 0& 0&-1&2&-1&2& 0& 0&0\\
0& 0& 1& 0&0& 0&0& 2& 1&0\\
0& 0& 0& 1&0& 1&0&-2& 2&2\\
0& 0& 0& 0&0& 0&1& 0&-1&2
\end{bmatrix}.
\end{equation*}

Solving the real eigenvalue equation for $\kappa=2$ yields
\begin{equation*}
\begin{aligned}
E_{2,2}
=\ker(H_2(B)-2I)=\operatorname{span}\left\{
     e_8+e_{10},\ -e_4+e_6
   \right\}.
\end{aligned}
\end{equation*}
Therefore, every quadratic polynomial in this eigenspace has the form
\begin{equation*}
    q_{s,t}(x)
    =s(x_3^2+x_4^2)+t(x_2x_3-x_1x_4),
\end{equation*}
where $s,t\in\mathbb R$. Its coefficient norm in the basis \eqref{basis} is
\begin{equation*}
    \|a(s,t)\|^2=2s^2+2t^2.
\end{equation*}
Choose
\begin{equation*}
    s=0,
    \quad t=\frac1{\sqrt2}.
\end{equation*}
Then
\begin{equation*}
\begin{aligned}
    q_*(x)&=\frac{x_2x_3-x_1x_4}{\sqrt2},\\
    a_*&=\frac1{\sqrt2}
    (0,0,0,-1,0,1,0,0,0,0)^T,
\end{aligned}
\end{equation*}
and
\begin{equation*}
    \|a_*\|^2=1,
    \quad H_2(B)a_*=2a_*.
\end{equation*}

We next apply the exact irreducibility test.  Since $d=2$, only the split
$2=1+1$ must be considered.  Let
\begin{equation*}
\begin{aligned}
    q_b&=b_1x_1+b_2x_2+b_3x_3+b_4x_4,\\
    q_c&=c_1x_1+c_2x_2+c_3x_3+c_4x_4.
\end{aligned}
\end{equation*}
$R_{1,1}(a_*)$ is \textit{True} if $\sum_i b_i^2=1$ together with
\begin{align*}
b_1c_1&=0, &
b_2c_2&=0,\\
b_3c_3&=0, &
b_4c_4&=0,\\
b_1c_2+b_2c_1&=0, &
b_1c_3+b_3c_1&=0,\\
b_1c_4+b_4c_1&=-\frac1{\sqrt2}, &
b_2c_3+b_3c_2&=\frac1{\sqrt2},\\
b_2c_4+b_4c_2&=0, &
b_3c_4+b_4c_3&=0,
\end{align*}
which has no solution.

Finally, choose the unit vector
\begin{equation*}
    x^*=\left(\frac1{\sqrt2},0,
              \frac1{\sqrt2},0\right)^T.
\end{equation*}
Then
\begin{equation*}
    \|x^*\|^2=1,
    \quad q_*(x^*)=0,
\end{equation*}
and
\begin{equation*}
    \nabla q_*(x^*)
    =\left(0,\frac12,0,-\frac12\right)^T.
\end{equation*}
Consequently,
\begin{equation*}
    \sum_{i=1}^{4}
    \left(
      \frac{\partial q_*}{\partial x_i}(x^*)
    \right)^2
    =\frac12>0.
\end{equation*}
Thus, we have explicitly constructed a feasible reduced rank hypersurface germ with
\begin{equation*}
    d=2,
    \quad \kappa=2,
    \quad a=a_*,
    \quad x=x^*,
\end{equation*}
for formula \eqref{final fomular}. Therefore Algorithm 1 returns YES when $d=2$, and the degrees $d=3$ and $d=4$ need not be examined. The resulting reduced rank hypersurface germ is
\begin{equation}
    x_2x_3-x_1x_4=0.
    \label{hypersurface ex1}
\end{equation}
One can verify that
\begin{equation*}
    \mathcal L_B(x_2x_3-x_1x_4)
    =2(x_2x_3-x_1x_4),
\end{equation*}
so every trajectory starting on
\eqref{hypersurface ex1} remains on it.

\section{Conclusion}

This paper revisited the verification of the RRHGC, which is a technique in determining whether the reduced rank set of a strictly bilinear system contains reduced rank hypersurfaces. For matrices with only real spectrum, we established that the RRHGC holds if and only if at least one of five conditions is satisfied, thereby unifying and simplifying the previous real-spectrum criteria. For matrices with complex eigenvalues, we characterized the relevant invariant hypersurfaces using Darboux polynomials and reformulated RRHGC verification as the problem of finding a Darboux polynomial with a constant cofactor. Based on these results, a complete verification algorithm was proposed and illustrated through an example. The proposed algorithm can be used to address the special cases overlooked by the previous one and provides a more reliable basis for analyzing the global near-controllability of bilinear systems whose transition matrices form semigroups.

\bibliographystyle{IEEEtran}
\bibliography{main}

$\left. {}\right. $

\addtolength{\textheight}{-12cm}   


\end{document}